\newtheorem{definition}{Definition}[]
\newtheorem{theorem}{Theorem}[]
\newtheorem{assumption}{Assumption}[]
\newcommand\sampledfrom{\mathrel{{\leftarrow}\vcenter{\hbox{\tiny\rmfamily\upshape\$}}}}
\begin{document}

\title[Random Number Generation from Pulsars]{Random Number Generation from Pulsars}


\author[]{\fnm{Hayder} \sur{Tirmazi}}\email{hayder.research@gmail.com} 

\affil[]{\orgname{City College of New York}}


\abstract{Pulsars exhibit signals with precise inter-arrival times that are on the order of milliseconds to seconds, depending on the individual pulsar. There are subtle variations in the timing of pulsar signals. We show that these variations can serve as a natural entropy source for the creation of Random Number Generators (RNGs). We also explore the effects of using randomness extractors to increase the entropy of random bits extracted from Pulsar timing data. To evaluate the quality of the Pulsar RNG, we model its entropy as a $k$-source and use well-known cryptographic results to show its closeness to a theoretically ideal uniformly random source. To remain consistent with prior work, we also show that the Pulsar RNG passes well-known statistical tests such as the NIST test suite.}

\keywords{cryptographic randomness, random number generation, trngs, astronomy}



\maketitle

\section{Introduction}

Random number generators (RNGs) are a fundamental part of modern cryptography~\citep{katz_lindell_2014}. They can be used to implement provably secure secret-key encryption schemes~\citep{katz_lindell_2014,pass_shelat}, digital signature schemes~\citep{katz_lindell_2014}, and the key generation step of public-key encryption schemes such as RSA~\citep{rivest_et_al_1978} and ~\cite{el_gamal_1985}. True Random Number Generators (TRNGS) use noise in physical processes as a source of randomness. As an example, Intel's TRNG uses Johnson noise in resistors~\citep{benjamin_1999}. Pseudo-Random Number Generators (PRNGs) are initialized with a \textit{seed} and use algorithms to produce numbers that seem random to adversaries that do not know the seed and are restricted to performing all their computations in probabilistic polynomial time~\cite{pass_shelat}. The initial seed of a PRNG may be derived from a TRNG. Prior work on extracting randomness from astrophysical sources includes, in chronological order, hot pixels in astronomical imaging~\citep{Pimbblet_Bulmer_2005}, radio astronomy signal data noise~\citep{chapman_et_al_2016}, cosmic microwave background radiation spectra~\citep{lee_cleaver_2017}, cosmic photon arrival times~\citep{wu_et_al_2017}, and intrinsic flux density distribution of single pulsars~\citep{dawson_2022}.

\begin{figure*}[]
\centering
\begin{subfigure}{0.7\linewidth}
  \includegraphics[width=\linewidth]{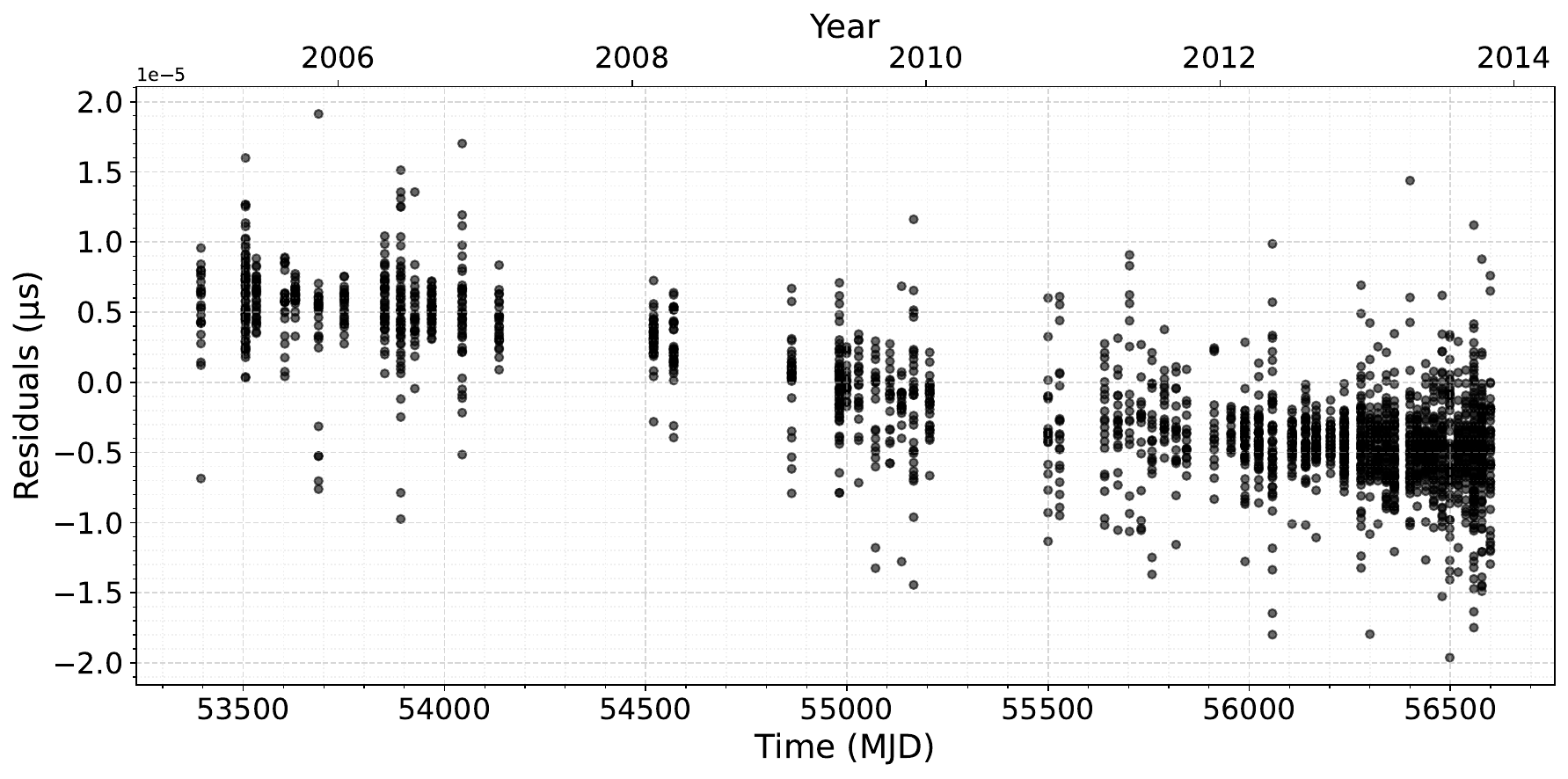}
  \caption{J0030+0451 NANOGrav}
  \label{fig:J0030_nanograv}
\end{subfigure}\hfill 
~ 
\begin{subfigure}{0.7\linewidth}
  \includegraphics[width=\linewidth]{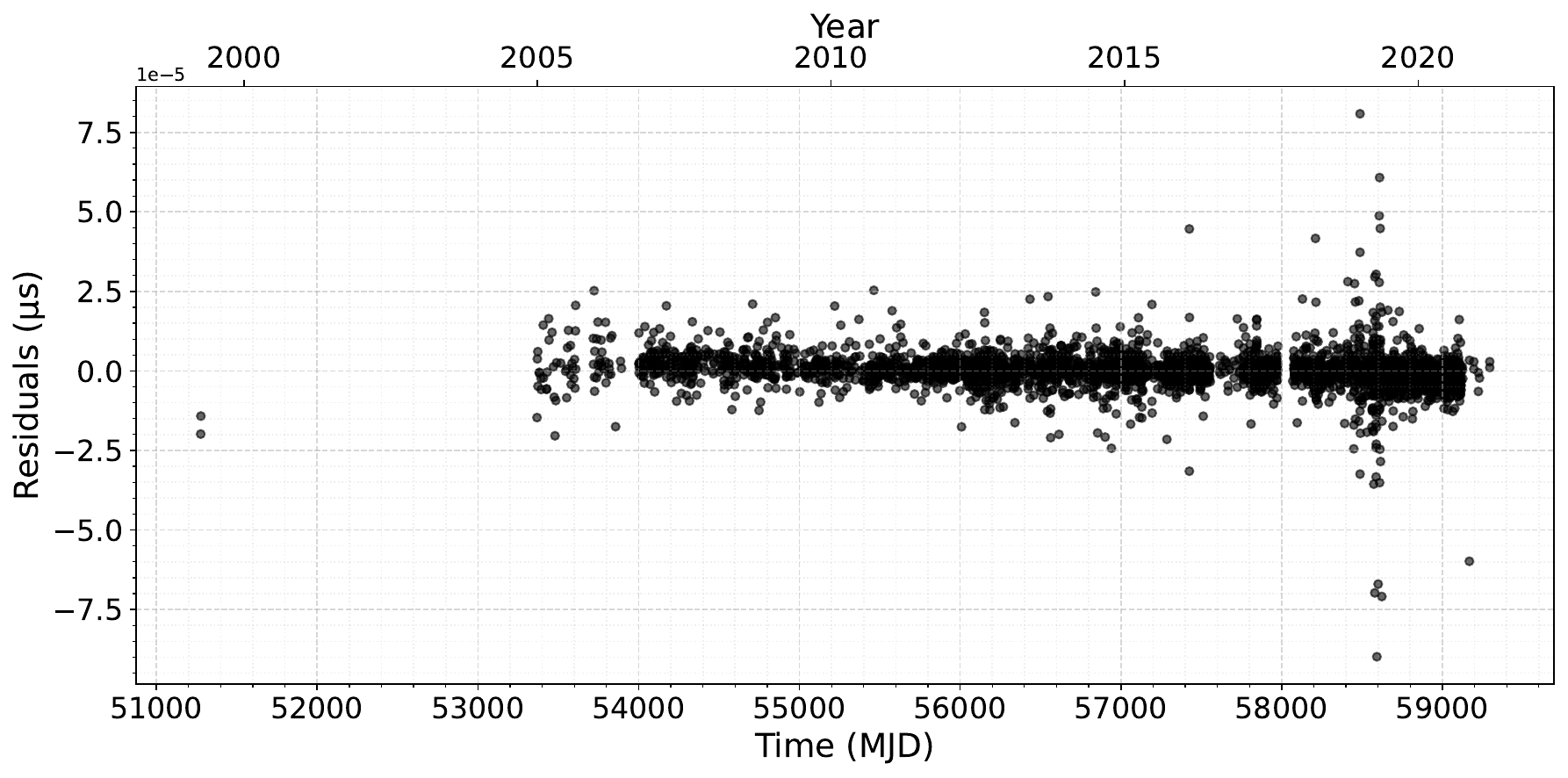}
  \caption{J0030+0451 EPTA}
  \label{fig:J0030_epta}
\end{subfigure}

\medskip 
\begin{subfigure}{0.7\linewidth}
  \includegraphics[width=\linewidth]{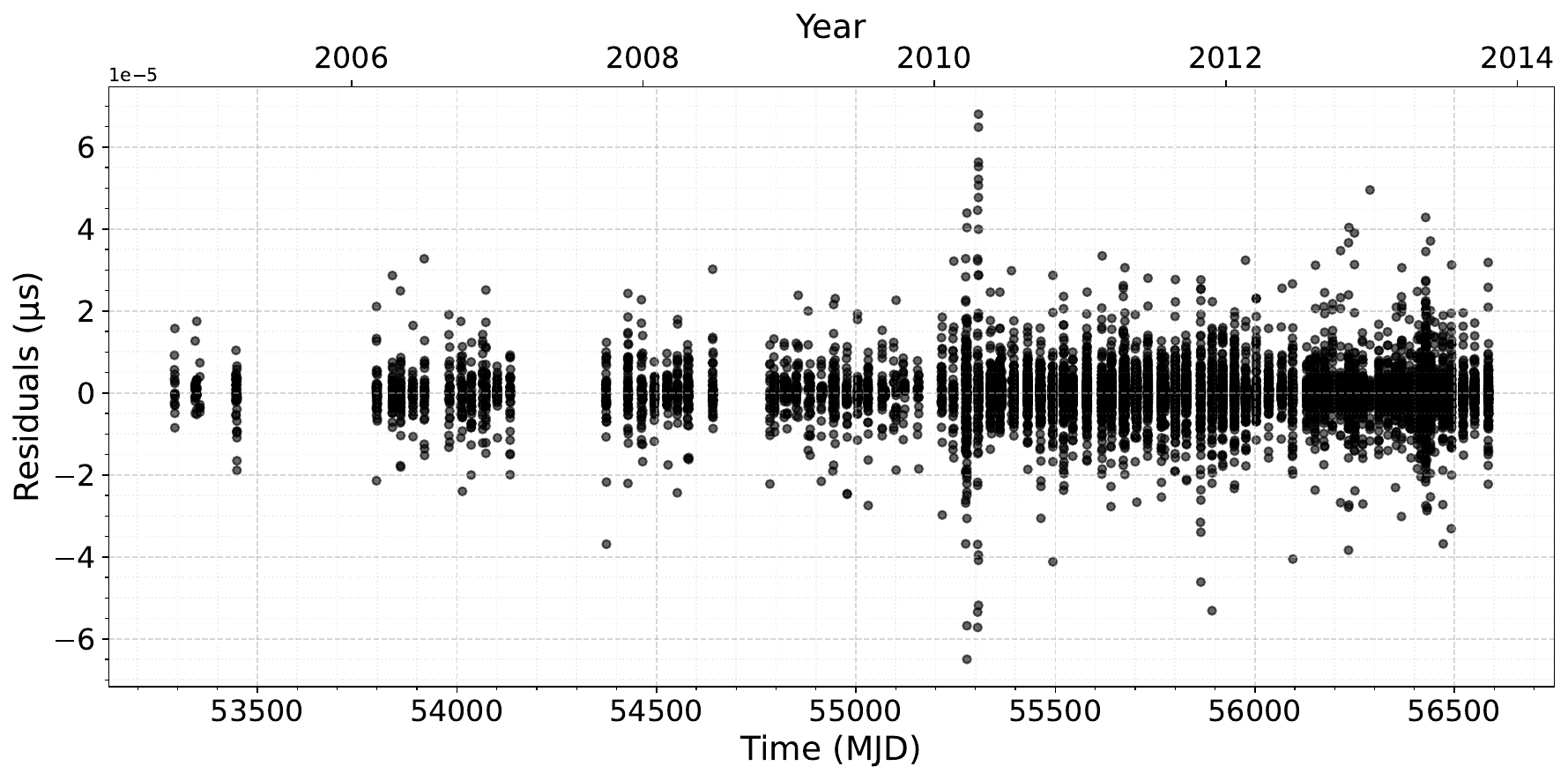}
  \caption{J1918-0642 NANOGrav}
  \label{fig:J1918_nanograv}
\end{subfigure}\hfill 
\begin{subfigure}{0.7\linewidth}
  \includegraphics[width=\linewidth]{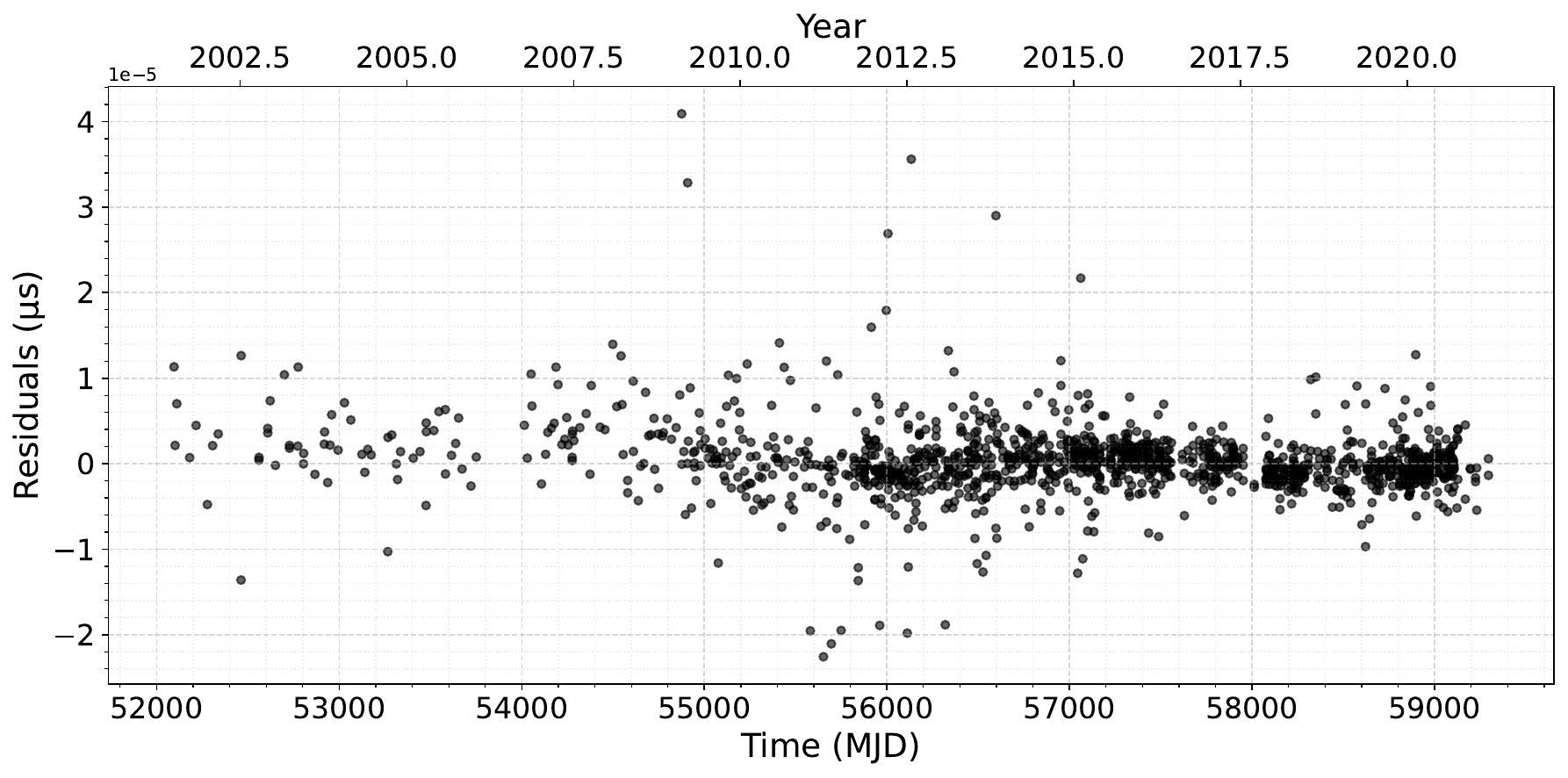}
  \caption{J1918-0642 EPTA}
  \label{fig:J1918_epta}
\end{subfigure}

\caption{Timing Variations in $\mu s$ for the J0030+0451 and J1918-0642 pulsars with Modified Julian Date (MJD) and Year plotted on the $x$ axis}
\label{fig:roc}
\end{figure*}

Pulsar timing variations provide an alternative entropy source that is structured yet unpredictable. To the best of our knowledge, this paper is the first to investigate the variation in inter-arrival times of pulsar signals as a novel entropy source for cryptographic random number generation. Prior work on random number generation from astrophysical sources notably including~\cite{dawson_2022} and~\cite{Pimbblet_Bulmer_2005} has relied primarily on black-box statistical testing to evaluate randomness quality. There are well-known concerns~\citep{saarinen_2022} with relying solely on such statistical tests without a proper theoretical analysis of the entropy source. In fact, the statistical tests endorsed by NIST can be passed even by weak PRNGs~\citep{saarinen_2022}. Unlike prior work, our work also includes a theoretical analysis using known cryptographic techniques to complement our empirical findings. 

The rest of this paper is structured as follows. In Section~\ref{sec:random}, we create a Pulsar RNG from observational data from two sources: the North American Nanohertz Observatory for Gravitational Waves (NANOGrav)~\citep{matthew_et_al_2016} and the European Pulsar Timing Array (EPTA)~\citep{epta2024second}. Section~\ref{sec:evaluation} evaluates the Pulsar RNG using a cryptographic analysis and statistical tests. Section~\ref{sec:discussion} provides discussion relevant to the viability of the Pulsar RNG.

\section{Generating Random Bits}~\label{sec:random}

We use measurement data from two pulsars, PSR J0030+0451 and PSR J1918-0642. These two pulsars are present in both the NANOGrav 9-year dataset release~\citep{benjamin_1999} and the EPTA DR2 dataset release~\citep{epta2024second}. Our Pulsar RNG extracts timing residuals from these datasets using PINT~\citep{luo_et_al_2021} v1.1.1. Let $L$ be the list of pulsar residuals where $L_{i}$ is the ith element. We first normalize the residual values to create list $N$ in the usual way ($N_{i} = \frac{L_{i} - \min(L)}{\max(L) - \min(L)}$). We then investigate three quantification techniques on the list $N$ to convert it to a list of random bits $R$. 

\begin{enumerate}
    \item A simple threshold: $R_{i} = 1 $ if $N_{i} \geq 0.5$, otherwise $R_{i} = 0$
    \item 8-bit Gray coding~\citep{doran2007gray}
    \item Using the 8-bit Gray coded value as a seed for a SHA-512 hash~\citep{penard2008secure}
\end{enumerate}

Figure~\ref{fig:entropy_quantification} shows the measured dataset entropy in bits per byte of these three quantification methods. Note that by \textit{entropy} throughout this paper we mean information entropy, also known as Shannon entropy. We measure all dataset entropy results in this paper using the \texttt{ent} tool~\citep{walker_2008}.

\begin{figure}
    \centering
    \includegraphics[width=0.7\linewidth]{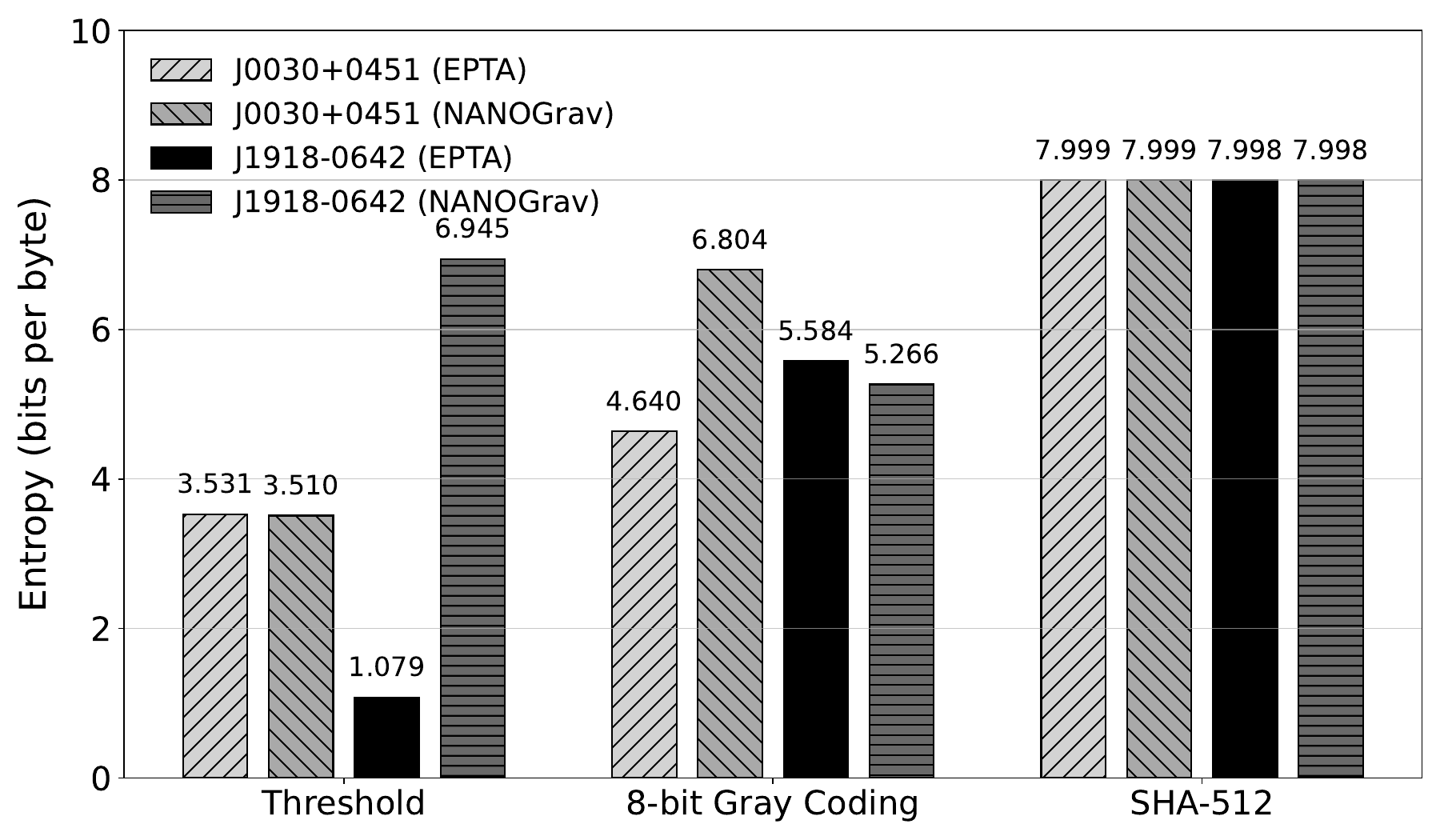}
    \caption{Entropy of Different Quantification Methods for 2 Pulsars across EPTA and NANOGrav data}
    \label{fig:entropy_quantification}
\end{figure}

\begin{figure}
    \centering
    \includegraphics[width=0.7\linewidth]{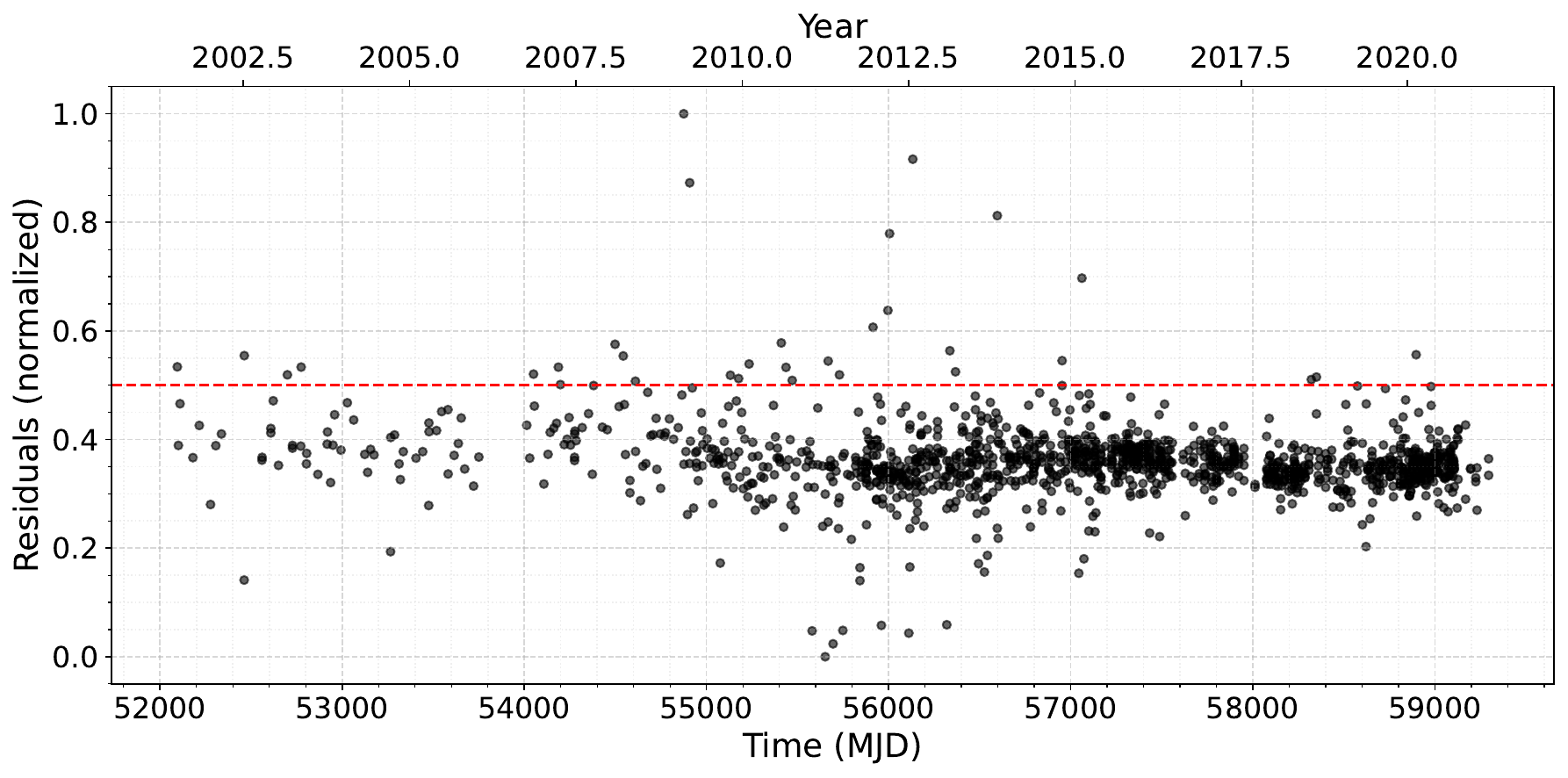}
    \includegraphics[width=0.7\linewidth]{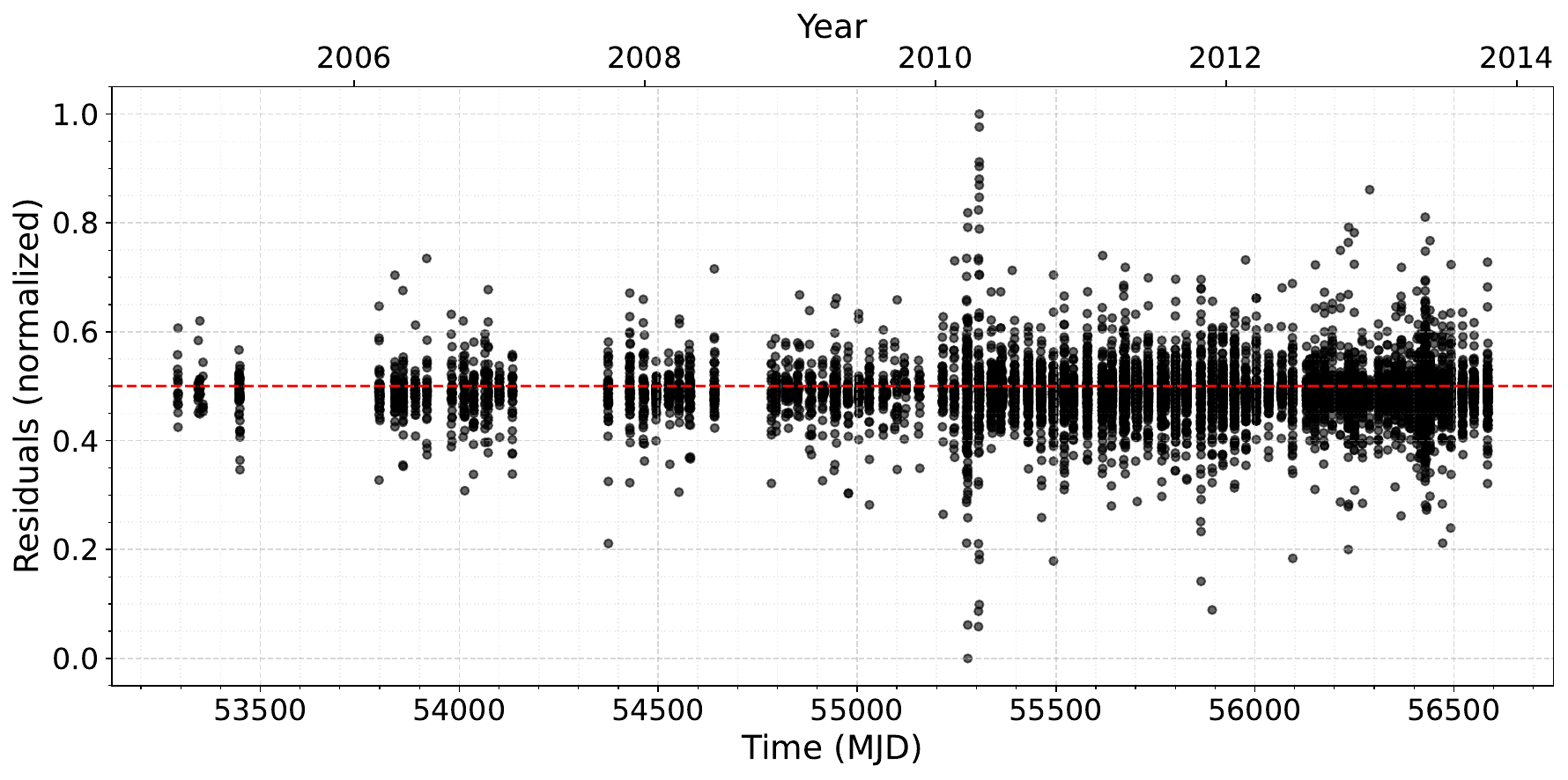}
    \caption{Normalized PSR J1918-0642 residuals on EPTA data (above) and NANOGrav data (below).}
    \label{fig:normalized}
\end{figure}

Using threshold as a quantification method requires a careful choice of where to put the threshold based on each distribution. Our method of uniformly using $\tau = 0.5$ as the threshold provides vastly different results for, as an example, the PSR J1918-0642 data on the EPTA dataset as opposed to the NANOGrav dataset. This is due to $\tau = 0.5$ not providing an equal direction of the EPTA data. This can be verified in Figure~\ref{fig:normalized} which shows normalized residuals ($N$) for both datasets. Notice that while the points on the NANOGrav data are roughly equally divided by a cutoff line at $0.5$ (marked by a dotted line in the figure), most points in the EPTA dataset are below $0.5$. 

We next investigate the effect of three different randomness extractors on our results. Randomness Extractors are functions that take as input 1) a comparatively small uniformly-random seed and 2) a comparatively weak entropy source, for example, radioactive decay~\citep{walker2001hotbits} or in our case Pulsar timing variation. Randomness Extractors output random bits that appear to computationally bound adversaries as being independent from the input entropy source and uniformly randomly distributed. Note that prior astrophysics-based RNG papers including~\cite{Pimbblet_Bulmer_2005} refer to randomness extractors as debiasing or deskewing algorithms. We test two simple \textit{ad hoc} randomness extractors, XOR-ing several subsequent bits~\citep{stipvcevic2014true} and~\cite{von1963various}. We also test a Randomness extractor based on SHAKE-256 from the SHA-3 family of cryptographic hash functions. Figure~\ref{fig:randomness_extractors} shows our results. While using a cryptographic hash yields the highest entropy, it is interesting to note that even an ad-hoc random extractor like Von Neumann provides considerable entropy gains.

\begin{figure}
    \centering
    \includegraphics[width=0.7\linewidth]{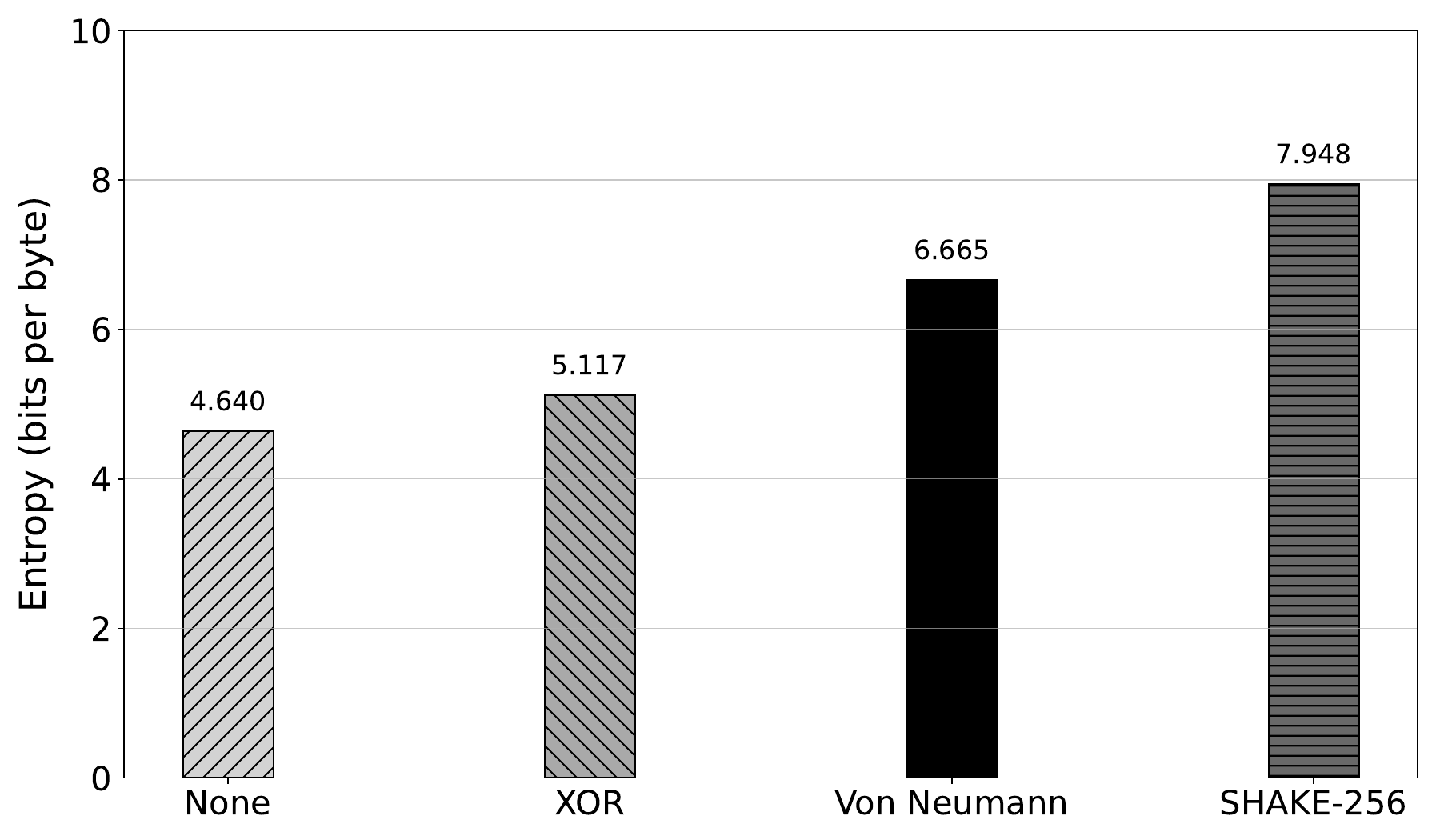}
    \caption{Entropy for different randomness extractors on data from PSR J0030+0451 (EPTA).}
    \label{fig:randomness_extractors}
\end{figure}

\section{Evaluation}\label{sec:evaluation}

\noindent A strict mathematical proof of absolute randomness is considered impossible~\citep{stipvcevic2014true}. To analyze TRNGs, we must rely on assumptions based on the fundamental postulates of physics~\citep{stipvcevic2014true} in combination with our mathematical analysis. We define randomness extractors and $k$-sources using standard cryptographic definitions (See \ref{sec:formal} for details). We use our definitions to show the suitability of Pulsar RNG under a reasonable physical assumption. We assume that pulsar timing variations exhibit non-trivial entropy and can be modeled as a k-source (Assumption~\ref{as:pulsar}). From a theoretical standpoint, this assumption aligns with existing stochastic models~\cite{antonelli} of pulsar timing variations due to non-deterministic phenomena such as glitches~\citep{glitches}, as well as the presence of Gravitational Waves~\citep{Agazie_2023}.

We also empirically verify our assumption based on Pulsar data from NANOGrav and EPTA. We show our empirical results in Table~\ref{tab:min-entropy}. We generate binary arrays from $10$ different Pulsars, $5$ in the NANOGrav dataset and $5$ in the EPTA dataset respectively. Then we measure the min-entropy (Definition~\ref{def:min-entropy}) of generated binary arrays in units of bits per bit. By a non-trivial min-entropy, we mean a min-entropy value significantly larger than $0$. By definition, the min-entropy over a binary array will be in the range $[0, 1]$ in bits per bit. We rely on the $8$-bit Gray coding method for quantification that we discussed in Section~\ref{sec:random}.

\begin{table}
\centering
\begin{tabular}{lccc}
\hline
Pulsar & Dataset & Min Entropy (bits per bit) \\
\hline
PSR J0030+0451 & EPTA & 0.974\\
PSR J1918-0642 & EPTA & 0.826\\
PSR J2124-3358 & EPTA & 0.801\\
PSR J1843-1113 & EPTA & 0.911\\
PSR J2322+2057 & EPTA & 0.699\\
PSR J1832-0836 & NANOGrav & 0.679\\
PSR J2302+4442 & NANOGrav & 0.909\\
PSR J0030+0451 & NANOGrav & 0.882\\
PSR J1918-0642 & NANOGrav & 0.739\\
PSR J1012+5307 & NANOGrav & 0.798\\
\hline
\end{tabular}
\caption{Empirical results validating non-trivial (significantly larger than $0$) min-entropy for $10$ pulsars, $5$ from NANOGrav and $5$ from EPTA. Note that the maximum possible min-entropy is $1$.}
\label{tab:min-entropy}
\end{table}

\subsection{Cryptographic Guarantees}

We show that our Pulsar RNG satisfies the conditions of a strong extractor under the Leftover Hash Lemma. Randomness extractors are cryptographic primitives that can transform an entropy source with bias into a (in practice) uniformly random distribution. The Leftover Hash Lemma formally proves that a universal hashing family can extract nearly uniform bits from a $k$-source. For the hash function that performs this debiasing in Pulsar RNG, we use SHAKE-256 from the SHA-3 family of cryptographic hash functions. The formal proof is provided in~\ref{sec:formal}. Informally, this result implies that random bits generated by Pulsar RNGs are statistically close to random bits sampled from some ideal uniformly random distribution. More precisely, the statistical distance between the output of Pulsar RNG and a uniformly random distribution is bounded by a suitable $\varepsilon$.

\subsection{Statistical Tests}\label{sec:statistical_tests}

Previous analyses of RNGs derived from astrophysical sources rely on black-box statistical tests such as the NIST SP800-22b test~\citep{bassham_et_al_2010}, \texttt{ent}~\citep{walker_2008}, \texttt{diehard}, and \texttt{dieharder}~\citep{brown_2018}. In Section~\ref{sec:statistical_tests}, we show that our Pulsar RNG performs well when evaluated using such statistical tests and provide a discussion regarding debiasing and mixing methods. We note, however, that presenting results for these black-box statistical tests as sole evidence for the suitability of cryptographic RNGs is inaccurate~\citep{saarinen_2022}. Even weak (insecure) PRNGs can pass these tests~\citep{saarinen_2022}. Therefore, we recommend using our statistical test results only as complementary evidence to our theoretical claims. We show NIST SP800-22b results for the complete version of our Pulsar RNG, including SHA-512 quantification and SHAKE-256 randomness extraction, on the NIST Statistical Testing suite. We test 1 million generated bits evaluated as 10 bitstreams of 100K bits each. Table~\ref{tab:nist-results} shows the results for PST J0030+0451 on EPTA Data. 

\begin{table}
\centering
\begin{tabular}{lccc}
\hline
NIST test & Proportion & P-value & Pass \\
\hline
Frequency & 10/10 & 0.911413 & Y \\
BlockFrequency & 10/10 & 0.911413 & Y \\
CumulativeSums & 10/10 & 0.534146 & Y \\
Runs & 10/10 & 0.213309 & Y \\
LongestRun & 10/10 & 0.534146 & Y \\
Rank & 10/10 & 0.534146 & Y \\
FFT & 10/10 & 0.534146 & Y \\
ApproximateEntropy & 10/10 & 0.122325 & Y \\
Serial & 10/10 & 0.017912 & Y \\
LinearComplexity & 10/10 & 0.004301 & Y \\
\hline
\end{tabular}
\caption{Statistical Test Results for Pulsar RNG on PSR J0030+0451 (EPTA).}
\label{tab:nist-results}
\end{table}

The Pulsar RNG passes all tests in NIST SP800-22b. NIST SP800-22b compares a given bit stream to the null hypothesis of a uniformly random distribution of binary bits~\citep{dawson_2022}. The frequency test checks the fraction of 0s and 1s in the bit stream. The block frequency test checks the same fraction but for segments or \textit{blocks} of the bit stream. The cumulative sum test checks whether the cumulative sum of the bits in the bit stream follows a random walk. The runs test checks the maximum length of consecutive 0s or 1s. The longest runs of ones test checks the maximum length of consecutive 1s in blocks of the bit stream. The Fast Fourier Transform~\citep{heideman1984gauss} test, FFT for short, checks if there are any repeating patterns in the bit stream. The Approximate Entropy test checks the frequency of all possible overlapping
m-bit patterns across the entire sequence. The Serial test focuses on the frequency of all possible overlapping m-bit patterns in the bit stream. Lastly, the Linear Complexity test focuses on the length of a linear feedback shift register (LFSR) to
determine whether or not the sequence is complex enough to be considered random~\citep{bassham_et_al_2010}. 

\section{Discussion}~\label{sec:discussion}

We have demonstrated the viability of pulsar timing variations as an entropy source for RNGs. Theoretically, we have proved the existence of pulsar-based strong randomness extractors based on reasonable physical assumptions. Experimentally, we have verified the quality of our Pulsar RNG using various standard statistical tests. When compared to TRNGs based on noise in electronic devices, such as Johnson noise in resistors~\citep{johnson_noise}, Pulsar RNGs are immune to local temperature fluctuations and other local environmental factors. Pulsar timing variation data is also publicly available from many sources including the North American Nanohertz Observatory for Gravitational Waves~\citep{Agazie_2023}, the European Pulsar Timing Array~\citep{epta2024second}, the Chinese Pulsar Timing Array~\cite{Xu_2023}, and the Parkes Pulsar Timing Array~\citep{parkes_pta} in Australia. Unlike most Quantum RNGs~\citep{ma2016quantum}, our Pulsar RNG does not require specialized hardware and uses this publicly available data.

In addition to cryptography, our Pulsar RNG is also suitable for many other applications. RNGs are used in Monte Carlo simulations to generate random variates from the underlying distributions of input variables~\citep{raychaudhuri2008introduction}. This random variate generation process is, in fact, the core of the Monte Carlo simulation. 
RNGs are also used to implement probabilistic data structures such as Bloom Filters~\citep{bloom_1970}, Skip Lists~\citep{pugh_1990}, and Sketches~\citep{cormode_2005}. Other uses of RNGs include Machine Learning algorithms~\citep{mitchell1997introduction} and even artwork~\citep{bauer1998gallery}. We have released a fully open-source Python implementation of our Pulsar RNG. Our implementation contains a usable tool to generate random numbers from pulsar data under multiple configurations. The tool currently supports NANOGrav and EPTA data but our modular implementation makes the tool easy to extend for other public datasets. In addition to our tool, we have also open sources all our data processing scripts, randomness extraction methods, and evaluation code. Lastly, we have also publicly released the raw bitstreams we generated to allow an independent verification of our results. We have made all the discussed artifacts available at \href{https://github.com/jadidbourbaki/pulsar_rng}{github.com/jadidbourbaki/pulsar\_rng}.

Many open problems emerge from this work. We observe (Table~\ref{tab:min-entropy}) that different pulsars yield different entropy. There are over 3000 known pulsars and a comprehensive study would provide a better understanding of the min-entropy and entropy distributions of pulsar timing variations in generation. The deployment of pullsar-based RNGs in real-work applications will also demonstrate practical advantages or challenges our analysis does not address.

\backmatter

\begin{appendices}

\section{Formal Definitions \& Proofs}\label{sec:formal}

Given set $S$, we write $x \sampledfrom S$ to mean that $x$ is sampled uniformly randomly from $S$. For set $S$, we denote by $|S|$ the number of elements in $S$. The same notation is used for a list $\mathcal{L}$. We write variable assignments using $\leftarrow$. If the output is the value of a randomized algorithm, we use $\sampledfrom$ instead. For a randomized algorithm $\mathrm{A}$, we write $\text{output} \leftarrow \mathrm{A}_{r}(\text{input}_{1}, \text{input}_{2}, \cdots, \text{input}_{l})$, where $r \in \mathcal{R}$ are the random coins used by $\mathrm{A}$ and $\mathcal{R}$ is the set of possible coins. We consider strings $\{0, 1\}^{n}$ to be elements of the Galois Field $\text{GF}(2^{n})$. We shorten random variables to r.v. We assume all adversaries are computationally bound. More precisely, we assume adversaries are restricted to non-uniform probabilistic polynomial time~\citep{pass_shelat}.

\begin{definition}[Statistical Distance $\Delta$]
    Let $X, Y$ be r.v.s with range $U$. \[\Delta(X, Y) = \frac{1}{2} \Sigma_{u \in U} |P[X = u] - P[Y = u]|\].
\end{definition}

\begin{definition}[$\varepsilon$-close]
    Let $X, Y$ be r.v.s with range $U$.
    \[
        X \approx_{\varepsilon} Y \equiv \Delta(X, Y) \leq \varepsilon
    \]
\end{definition}

\begin{definition}[Min-entropy]\label{def:min-entropy}
    Let $X$ be an r.v. with range $U$.
    \[
    H_{\infty}(X) = -\log_{2}(\max_{u \in U} P[X = u])
    \]
\end{definition}

\begin{definition}[k-source]\label{def:ksource}
    R.v $X$ is a $k$-source if $H_{\infty}(X) \geq k$
\end{definition}

\noindent We base our analysis on the following assumption regarding Pulsar timing variations.

\begin{assumption}\label{as:pulsar}
    Let $P_{X}$ be an r.v. representing timing variation in pulsar signals for pulsar $P$ with universe $U$. We assume $P_{X}$ is a $k$-source (Definition~\ref{def:ksource}) with non-trivial $k$.
\end{assumption}

\noindent We can now precisely define a randomness extractor~\citep{reyzin2011notes} in the cryptographic sense.

\begin{definition}[Randomness-Extractor]
    Let seed $U_{d}$ be uniformly distributed on $\{0, 1\}^{d}$. $\mathcal{E}: \{0, 1\}^{n} \times \{ 0, 1 \}^{d} \mapsto \{ 0, 1 \}^{m}$ is a ($k$, $\varepsilon$)-extractor if, for all $k$-sources $X$ on $\{0, 1\}^{n}$ independent of $U_{d}$, \[\mathcal{E}(X, U_{d}), U_{d}) \approx_{\varepsilon} (U_{m}, U_{d})\] where $U_m$ is uniformly distributed on $\{0, 1\}^{m}$ independent of $X$ and $U_{d}$.
\end{definition}

\noindent Extractors, as defined above, are also referred to in the literature as \textbf{strong} extractors.

\begin{definition}[Universal hash family]
A family $\mathcal{H}$ of hash functions of size $2^d$ from $\{0,1\}^n$ to $\{0,1\}^m$ is called universal if, for every $x, y \in \{0,1\}^n$ with $x \neq y$,
\[
P_{h \in \mathcal{H}} [h(x) = h(y)] \leq 2^{-m}.
\]    
\end{definition}

\noindent We denote our Pulsar RNG algorithm as $\mathcal{E}_{p}$. $\mathcal{E}_{p}$ relies on a universal hash family. $\mathcal{E}_{p}$ takes quantified data from a Pulsar entropy source ${x}_{p} \sampledfrom P_{X}$. It then uses a hash function from a universal hash family $h_{p} \sampledfrom \mathcal{H}$ of size $2^{d}$. In our default implementation, this is the SHAKE-256 hash function from the SHA-3 family of hashes. $\mathcal{E}_{p}$ then uses $p_{x}$ as the seed for $h_{p}$. 
\[
\mathcal{E}_{p}(p_{x}, h) = h_{p}(p_{x})
\]

There is a well-known result in cryptography called the Leftover Hash Lemma~\citep{reyzin2011notes}, originally proved by~\cite{impagliazzo_1989}. The Leftover Hash Lemma proves that a universal hash family can be used to construct a strong extractor from a $k$-source.

\begin{theorem}[Leftover hash lemma]
Let $X$ be a $k$-source with universe $U$. Fix $\varepsilon > 0$. Let $\mathcal{H}$ be a universal hash family of size $2^d$ with output length $m = k - 2\log_{2}(\frac{1}{\varepsilon})$. Define
\[
\mathcal{E}(x, h) = h(x)
\]
Then $\mathcal{E}$ is a strong $(k, \varepsilon/2)$ extractor with seed length $d$ and output length $m$.
\end{theorem}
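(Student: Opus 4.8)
The plan is to prove the Leftover Hash Lemma via a second-moment (collision-probability) argument, which is the standard route. First I would introduce the collision probability of a random variable $Z$ with range $V$, namely $\mathrm{Col}(Z) = \sum_{v \in V} P[Z=v]^2$, and record the elementary fact that $\mathrm{Col}(Z) = P[Z = Z']$ where $Z'$ is an independent copy of $Z$. The key object is the joint variable $(h(X), h)$ for $h \sampledfrom \mathcal{H}$ and $X$ the $k$-source independent of $h$; I want to bound its statistical distance to $(U_m, U_d)$, where $U_d$ indexes the hash family and $U_m$ is uniform on $\{0,1\}^m$.

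The main steps, in order, are: (1) Compute the collision probability of $(h(X), h)$. Two independent samples collide iff the two hash indices agree (probability $2^{-d}$) and, conditioned on that, the two outputs agree; the latter happens either because the two inputs $x, x'$ coincide (probability at most $\mathrm{Col}(X) \le 2^{-k}$ since $X$ is a $k$-source, using $P[X=x] \le 2^{-k}$ for all $x$) or because $x \neq x'$ but $h(x) = h(x')$, which by universality has probability at most $2^{-m}$. Combining, $\mathrm{Col}(h(X), h) \le 2^{-d}(2^{-k} + 2^{-m})$. (2) Invoke the standard inequality relating statistical distance to collision probability: for any $Z$ on a set of size $T$, $\Delta(Z, U_T)^2 \le \tfrac{1}{4}(T \cdot \mathrm{Col}(Z) - 1)$, which follows from Cauchy--Schwarz applied to $\sum_v |P[Z=v] - 1/T|$. (3) Apply this with $T = 2^{m+d}$ to get $\Delta((h(X),h), (U_m, U_d))^2 \le \tfrac14(2^{m+d} \cdot 2^{-d}(2^{-k}+2^{-m}) - 1) = \tfrac14(2^{m-k} + 1 - 1) = \tfrac14 \cdot 2^{m-k}$. (4) Substitute the hypothesis $m = k - 2\log_2(1/\varepsilon)$, so $2^{m-k} = \varepsilon^2$, giving $\Delta \le \varepsilon/2$, which is exactly the claim that $\mathcal{E}(x,h) = h(x)$ is a strong $(k, \varepsilon/2)$-extractor with seed length $d$ and output length $m$.

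The step I expect to be the main obstacle — or at least the one requiring the most care — is step (2), the collision-probability-to-uniformity bound, since it is where the actual inequality work lives: one writes $2\Delta(Z,U_T) = \sum_v \bigl|P[Z=v] - 1/T\bigr| \le \sqrt{T}\,\bigl(\sum_v (P[Z=v]-1/T)^2\bigr)^{1/2}$ by Cauchy--Schwarz, then expands the sum of squares as $\mathrm{Col}(Z) - 1/T$. I would state this as a self-contained lemma so the main proof stays clean. A secondary subtlety is bookkeeping the conditioning in step (1): one must handle the ``same index, same input'' versus ``same index, different input but colliding'' cases as a union bound rather than double-counting, and be careful that $\mathrm{Col}(X) \le 2^{-k}$ uses $H_\infty(X) \ge k$ via $\sum_x P[X=x]^2 \le \max_x P[X=x] \cdot \sum_x P[X=x] = \max_x P[X=x] \le 2^{-k}$. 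Everything else is routine substitution.
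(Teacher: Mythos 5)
Your proposal is correct and complete: the collision-probability computation $\mathrm{Col}(h(X),h) \le 2^{-d}(2^{-k}+2^{-m})$, the Cauchy--Schwarz bound $\Delta(Z,U_T)^2 \le \tfrac14(T\cdot\mathrm{Col}(Z)-1)$, and the final substitution $2^{m-k}=\varepsilon^2$ all check out, and the argument correctly treats the \emph{joint} variable $(h(X),h)$ as required for a strong extractor. Note that the paper does not actually prove this theorem --- it imports it as a known result, citing Impagliazzo et al.\ and Reyzin's lecture notes --- and your second-moment argument is precisely the standard proof given in those cited sources, so there is nothing to contrast.
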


We are now ready to prove our main result, that our Pulsar RNG $\mathcal{E}_{p}$ is a strong extractor. 

\begin{theorem}
    Let $P_{X}$ be an r.v. representing timing variation in pulsar signals for pulsar $P$ with universe $U$. Fix $\varepsilon > 0$. Pulsar RNG, $\mathcal{E}_{p}$ is a strong $(m + 2\log_{2}(\frac{1}{\varepsilon}))$-extractor with seed length $d$ and output length $m$.
\end{theorem}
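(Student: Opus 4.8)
The plan is to derive the statement as a direct corollary of the Leftover Hash Lemma, using Assumption~\ref{as:pulsar} to supply the $k$-source hypothesis and the construction of $\mathcal{E}_p$ to supply the universal hash family. First I would set $k := m + 2\log_{2}(\frac{1}{\varepsilon})$ and invoke Assumption~\ref{as:pulsar} to treat $P_X$ as a $k$-source on $\{0,1\}^{n}$ with universe $U$. I would note in passing that this implicitly requires $k \geq 2\log_{2}(\frac{1}{\varepsilon})$ so that the target output length $m = k - 2\log_{2}(\frac{1}{\varepsilon})$ is non-negative; this is exactly the content of the ``non-trivial $k$'' clause in Assumption~\ref{as:pulsar} for the parameter regimes of interest.

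Next I would recall from the construction that $\mathcal{E}_{p}(p_{x}, h) = h_{p}(p_{x})$, where $h_{p} \sampledfrom \mathcal{H}$ and $\mathcal{H}$ is a universal hash family of size $2^{d}$ mapping $\{0,1\}^{n}$ to $\{0,1\}^{m}$, and that the pulsar input $x_{p} \sampledfrom P_{X}$ is sampled independently of the seed selecting $h_{p}$. These are precisely the hypotheses of the Leftover Hash Lemma with extractor $\mathcal{E}(x,h) = h(x)$, source $X = P_{X}$, error parameter $\varepsilon$, and output length $m = k - 2\log_{2}(\frac{1}{\varepsilon})$.

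Applying the Leftover Hash Lemma then gives that $\mathcal{E}_{p}$ is a strong $(k, \varepsilon/2)$-extractor with seed length $d$ and output length $m$; substituting $k = m + 2\log_{2}(\frac{1}{\varepsilon})$ yields the claimed statement. Unwinding the definition of a strong extractor, this says $(\mathcal{E}_{p}(P_{X}, U_{d}), U_{d}) \approx_{\varepsilon/2} (U_{m}, U_{d})$, i.e. the statistical-distance bound $\Delta \leq \varepsilon/2$ referenced informally in Section~\ref{sec:evaluation}.

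The main obstacle is not the chain of implications, which is routine, but justifying that the concrete instantiation actually meets the universal-hash hypothesis: SHAKE-256 is not literally a universal hash family, so the honest options are either (i) to state and prove the result for the abstract $\mathcal{E}_{p}$ built on \emph{any} universal family $\mathcal{H}$ of size $2^{d}$, deferring the SHAKE-256 choice to a modeling remark, or (ii) to analyze a variant of $\mathcal{E}_{p}$ in which the hash is a bona fide universal family, e.g. multiplication in $\mathrm{GF}(2^{n})$ followed by truncation to $m$ bits, which the preliminaries already enable by identifying $\{0,1\}^{n}$ with $\mathrm{GF}(2^{n})$. I would take route (i) in the proof and flag the gap explicitly, consistent with the paper's own cautionary remarks about over-claiming.
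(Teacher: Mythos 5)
Your proposal is correct and follows exactly the same route as the paper, whose entire proof is the one-line remark that the result ``follows directly from Assumption~\ref{as:pulsar} and the Leftover Hash Lemma''; you simply make the substitution $k = m + 2\log_{2}(\frac{1}{\varepsilon})$ and the hypotheses explicit. Your added observation that SHAKE-256 is not literally a universal hash family identifies a real modeling gap that the paper itself leaves unaddressed, and your route (i) is the honest way to state the theorem.
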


\begin{proof}
    The proof follows directly from Assumption~\ref{as:pulsar} and the Leftover Hash Lemma.
\end{proof}

\end{appendices}


\bibliography{sn-bibliography}

\end{document}